\title{Non-Asymptotic Capacity Upper Bounds for the Discrete-Time Poisson Channel with Positive Dark Current} \author{Mahdi Cheraghchi\thanks{Department of EECS, University of Michigan, Ann Arbor. Email: \texttt{mahdich@umich.edu}. Research supported in part by NSF grant CCF-2006455.}\and Jo\~ao Ribeiro\thanks{Department of Computing, Imperial College London. Email: \texttt{j.lourenco-ribeiro17@imperial.ac.uk}.}}
\date{}
 \newcommand{\R}{\mathbb{R}}
\newcommand{\E}{\mathds{E}} 
\newcommand{\eps}{\varepsilon}
\newcommand{\N}{\mathds{N}}
\newtheorem{thm}{Theorem} 
 \newtheorem{lem}[thm]{Lemma}
   \theoremstyle{definition}
\newtheorem{remark}[thm]{Remark}
\newcommand{\Yq}{Y^{(q)}}
\newcommand{\Yqd}{Y^{(q)}_\delta}
\newcommand{\Yql}{Y^{(q)}_\lambda}
\newcommand{\KL}{D_{\mathsf{KL}}}
\newcommand{\Poi}{\mathsf{Poi}}
\begin{document}

\maketitle

\begin{abstract}
    We derive improved and easily computable upper bounds on the capacity of the discrete-time Poisson channel under an average-power constraint and an arbitrary constant dark current term.
    This is accomplished by combining a general convex duality framework with a modified version of the digamma distribution considered in previous work of the authors (Cheraghchi, J.\ ACM 2019; Cheraghchi, Ribeiro, IEEE Trans.\ Inf.\ Theory 2019).
    For most choices of parameters, our upper bounds improve upon previous results even when an additional peak-power constraint is imposed on the input.
\end{abstract}

\section{Introduction}\label{sec:intro}

The Discrete-Time Poisson (DTP) channel with \emph{dark current} $\lambda\geq 0$ is a memoryless channel which on input $x\in\R^+_0$ outputs $Y_x$ following a Poisson distribution with mean $\lambda+x$, which we denote by $\Poi_{\lambda+x}$.
In other words, it holds that
\begin{equation*}
    Y_x(y)=\Poi_{\lambda+x}(y)=e^{-(\lambda+x)}\frac{(\lambda+x)^y}{y!},\quad y=0,1,2,\dots,
\end{equation*}
where $Y_x(y)$ denotes the probability that the output of the DTP channel on input $x$ equals $y$.
This channel is a discrete model of optical communication first studied explicitly by Shamai~\cite{Sha90}, where the input $x$ describes the intensity of a photon-emitting source at the sender's side.
The receiver observes a photon count that follows a Poisson distribution, possibly corrupted by some background interference modelled by an additive dark current parameter $\lambda$.

Without any constraints, the capacity of the DTP channel is infinite.
However, in practice it is reasonable to impose some constraints on the input distributions for the DTP channel.
Two well-studied and practically motivated constraints are an average-power constraint $\mu$, where one restricts input distributions $X$ to those satisfying $\E[X]\leq \mu$, with $\mu=\infty$ meaning that no average-power constraint is imposed, and a peak-power constraint $A$, where one enforces that $\Pr[X\leq A]=1$, with $A=\infty$ meaning that no peak-power constraint is imposed.
Given such constraints on the input distribution $X$, we are interested in the capacity of the DTP channel with dark current $\lambda$ under an average- and/or peak-power constraint
\begin{equation*}
    C(\lambda,\mu,A)=\sup_{X:\E[X]\leq\mu,0\leq X\leq A}I(X;Y_X),
\end{equation*}
where $Y_X$ denotes the output distribution of the DTP channel with dark current $\lambda$ and input distribution $X$.
Throughout this work, we measure capacity in nats/channel use.
For simplicity, when $A=\infty$ (i.e., no peak-power constraint is imposed), we denote the corresponding capacity of the DTP channel by $C(\lambda,\mu)$, and furthermore when $\lambda=0$ (meaning that there is no dark current) we denote the corresponding capacity of the DTP channel by $C(\mu)$.
For every $\lambda$, $\mu$, and $A$, we have the chain of inequalities
\begin{equation*}
    C(\mu)\geq C(\lambda,\mu)\geq C(\lambda,\mu,A).
\end{equation*}

Currently, the exact value of $C(\lambda,\mu,A)$ is not known for any non-trivial choice of parameters, although we have some upper and lower bounds on this quantity along with some asymptotic results.
Moreover, when $A$ is finite, we also know algorithms for numerically approximating the capacity~\cite{CHC13,SSEL15}.
We discuss these in detail in Section~\ref{sec:prev}.
Notably, whenever the average- and peak-power constraints $\mu$ and $A$ are neither very small nor very large, and whenever the dark current $\lambda$ is not very large compared to $\mu$ and $A$, the best known analytical upper bound on $C(\lambda,\mu,A)$ is actually an upper bound on $C(\mu)$~\cite{CR19a}.

\paragraph{Our contributions.}
In this work, we derive significantly improved non-asymptotic upper bounds on $C(\lambda,\mu)$, and thus also on $C(\lambda,\mu,A)$ for most reasonable choices of parameters when the dark current $\lambda>0$ is constant.
Our upper bounds are easy to compute and are in turn sharply upper bounded by closed-form, elementary expressions.

\subsection{Previous work}\label{sec:prev}

In this section, we discuss previous results on the capacity of the constrained DTP channel, with special focus on known (asymptotic and non-asymptotic) capacity upper bounds.
Most previous work has focused on asymptotic settings where $\mu\to 0$ or $\mu\to\infty$, although there exist some capacity bounds applicable to non-asymptotic settings.

\paragraph{Capacity upper bounds.}
Brady and Verdú~\cite{BV90,Bra90} were the first to study the asymptotic capacity of the DTP channel when $\mu\to\infty$.
They derived bounds on the capacity of the DTP channel with an average-power constraint only, $C(\lambda,\mu)$, when $\mu\to\infty$ and the ratio $\mu/\lambda$ stays constant (meaning, in particular, that $\lambda\to\infty$ as well).
Namely, for every $\eps>0$ they derived the asymptotic upper bound (see~\cite[Section 4, Proof of Theorem 4]{Bra90})
\begin{equation}\label{eq:BVbound}
    C(\lambda,\mu)\leq \ln(1+\mu+\lambda)+(\mu+\lambda)\ln\left(1+\frac{1}{\mu+\lambda}\right)-\frac{1}{2}\ln(2\pi(\mu+\lambda))+\ln(3/2)+\eps
\end{equation}
valid for all $\mu>C_\eps$, where $C_\eps$ is a large constant depending on $\eps>0$.
A characterization of the asymptotic behavior of $C(\lambda,\mu)$ when $\mu\to\infty$ and $\lambda$ is constant was later obtained by Martinez~\cite{Mar07} and Lapidoth and Moser~\cite{LM09}, who showed that
\begin{equation*}
    \lim_{\mu\to\infty}\frac{C(\lambda,\mu)}{\ln\mu}=\frac{1}{2},
\end{equation*}
when $\lambda\geq 0$ is an arbitrary constant.
Overall, the best upper bound on $C(\mu)$ for any $\mu$ outside the asymptotic regime $\mu\to 0$ was obtained by the authors in~\cite{CR19a}, improving on a previous upper bound of Martinez~\cite{Mar07}, and is given by
\begin{equation}\label{eq:best0dark}
    C(\mu)\leq \mu  \ln \left(\frac{1+\left(1+e^{1+\gamma }\right) \mu+2\mu ^2}{e^{1+\gamma } \mu + 2 \mu ^2}\right)
	+\ln \left(1+\frac{1}{\sqrt{2e}}\left(\sqrt{\frac{1+(1+e^{1+\gamma})\mu+2\mu^2}{1+\mu}}-1\right)\right),
\end{equation}
where $\gamma\approx 0.5772$ is the Euler-Mascheroni constant.
Notably, the bound in~\eqref{eq:best0dark} has the same first-order asymptotic behavior as $C(\mu)$ both when $\mu\to\infty$, and, as we discuss below, when $\mu\to 0$.

The setting where $\mu\to 0$ was first considered by Lapidoth, Shapiro, Venkatesan, and Wang~\cite{LSVW11}, who determined the first order behavior of $C(\lambda,\mu,A)$ in this setting, both when $\lambda=0$ and $\lambda>0$ is a fixed constant, and both with and without a peak-power constraint $A$.
When there is no dark current $\lambda$ present, it holds that
\begin{equation*}
    \lim_{\mu\to 0}\frac{C(\mu,A)}{\mu\ln(1/\mu)}=1.
\end{equation*}
On the other hand, when $\lambda>0$ is constant, we have
\begin{equation}\label{eq:muto0darkcurr}
    \frac{1}{2}\leq \liminf_{\mu\to 0}\frac{C(\lambda,\mu)}{\mu\ln\ln(1/\mu)}\leq\limsup_{\mu\to 0}\frac{C(\lambda,\mu)}{\mu\ln\ln(1/\mu)}\leq 2.
\end{equation}
In order to prove~\eqref{eq:muto0darkcurr}, the authors~\cite[Expression (114)]{LSVW11} derive an explicit non-asymptotic upper bound on $C(\lambda,\mu)$ given by
\begin{equation}\label{eq:lapidothbound}
    C(\lambda,\mu)\leq F_1(\lambda,\mu)+F_2(\lambda,\mu)+F_3(\lambda,\mu)
\end{equation}
with $F_1$, $F_2$, and $F_3$ given by
\begin{align*}
    F_1(\lambda,\mu)&=\left(\eta\ln\eta+\frac{1}{12\eta}+\frac{1}{2}\ln(2\pi\eta)+\lambda-\eta\ln\lambda-\ln(1-p)\right)e^{\eta+\eta\ln\lambda-\eta\ln\eta+\frac{\mu}{\eta-\sqrt{\eta}-\lambda}},\\
    F_2(\lambda,\mu)&=\max\left(0,\left(1+\ln(1/p)+\ln\lambda\right)\left(\mu+\frac{\lambda\mu}{\eta-\sqrt{\eta}-\lambda}+\lambda e^{\eta-1-\lambda+(\eta-1)\ln\lambda-(\eta-1)\ln(\eta-1)}\right)\right),\\
    F_3(\lambda,\mu)&=\mu\left(1+\frac{\lambda}{\eta-\lambda}\right)\max\left(0,\ln(1/\lambda)\right)+\mu\frac{\eta\ln(\eta/\lambda)}{\eta-\lambda},
\end{align*}
where $\eta$ is a free parameter that must be larger than some constant $C_\lambda>0$ depending on $\lambda$ and $p\in(0,1)$ is a free parameter.
By inspection of~\cite[Section IV-B]{LSVW11}, it must at the very least be the case that $\eta-\sqrt{\eta}>\lambda$ for the bound to hold.
Therefore, the upper bound in~\eqref{eq:lapidothbound} is always significantly larger than
\begin{equation}\label{eq:lapidothunderestimate}
    \mu(1+\max(0,1+\ln\lambda)+\max(0,\ln(1/\lambda))).
\end{equation}
We will use this conservative underestimate of~\eqref{eq:lapidothbound} when comparing the different bounds in Section~\ref{sec:compbounds}.
Later, Wang and Wornell~\cite{WW14} determined the second-order asymptotics of the capacity of the DTP channel under an average-power constraint $\mu$ with dark current $\lambda=c\mu$ for an arbitrary constant $c>0$.
They showed that
\begin{equation*}
    C(c\mu,\mu)=\mu\ln(1/\mu)+\mu\ln\ln(1/\mu)+O_c(\mu)
\end{equation*}
when $\mu\to 0$, where the term $O_c(\mu)$ depends on the constant $c$.
Moreover, they gave an upper bound~\cite[Expression (180)]{WW14} on $C(\lambda,\mu)$ matching this asymptotic behavior which holds whenever $\mu$ and $\lambda$ are small enough,
\begin{multline}\label{eq:WWbound}
    C(\lambda,\mu)\leq \mu\ln\ln(1/\mu)+\mu-\log(1-\mu-\lambda)-\lambda+\frac{\lambda^2}{2}\ln\ln(1/\mu)\\-(\mu+\lambda)\log\left(1-\frac{1}{\ln(1/\mu)}\right)+\mu e^{-\lambda}\sup_{x\geq 0}\phi_{\mu,\lambda}(x),
\end{multline}
where $\phi_{\mu,\lambda}(x):=\frac{1-e^{-x}}{x}\ln\left(\frac{x+\lambda}{(\mu+\lambda)\ln(1/\mu)}\right)$.

Finally, Aminian et al.~\cite[Example 2]{AAGNKM15} also derived a non-asymptotic upper bound on $C(\lambda,\mu,A)$ given by
\begin{equation}\label{eq:cov}
    C(\lambda,\mu,A)\leq \sup_{X:\E[X]\leq \mu,X\leq A}\textrm{Cov}(X+\lambda,\ln(X+\lambda))=\begin{cases}
\frac{\mu}{A}(A-\mu)\ln(A/\lambda+1),&\text{ if $\mu<A/2$,}\\
\frac{A}{4}\ln(A/\lambda+1),&\text{ otherwise,}
\end{cases}
\end{equation}
where $\textrm{Cov}$ denotes the covariance.
Note that when $A\to\infty$, the upper bound in~\eqref{eq:cov} becomes arbitrarily large.
Therefore, it does not imply any non-trivial upper bound on $C(\lambda,\mu)$.
However, this bound may be used to recover some known asymptotic results on $C(\lambda,\mu,A)$ when $\mu\to 0$ from~\cite{LSVW11}.

\paragraph{Other results on the DTP channel.}
Besides the capacity upper bounds discussed above, several other aspects of the capacity of the DTP channel have been studied.
Still with respect to capacity bounds, several works have derived both asymptotic and non-asymptotic lower bounds on the capacity of the DTP channel under different combinations of average- and peak-power constraints and dark current~\cite{Gor62,GB65,His71,JM71,Mar07,LM09,CHC10,LSVW11,WW14,YZWD14}.
Moreover, when both average- and peak-power constraints are imposed on the input, some algorithms have been proposed to numerically approximate $C(\lambda,\mu,A)$ and the corresponding capacity-achieving input distribution~\cite{CHC13,SSEL15}.

On another note, the properties of the capacity-achieving distribution for the DTP channel have also been studied.
Shamai~\cite{Sha90} was the first to study this problem, and showed that the support of the capacity-achieving distribution is finite when both average- and peak-power constraints are imposed. Moreover, he also gave conditions on $\mu$ and $A$ that ensure that an input distribution with two or three mass points is capacity-achieving.
Later, Cao, Hranilovic, and Chen~\cite{CHC14a,CHC14b} extended the results of Shamai.
In particular, they showed that the capacity-achieving distribution under both average- and peak-power constraints must have some probability mass at $x=0$.
Moreover, if there is only an active peak-power constraint, they show there must also be some probability mass at $x=A$, and that this may not be the case otherwise.
Additionally, they showed that the support of the capacity-achieving distribution under an average-power constraint only must be unbounded.
This result was then strengthened in~\cite{CR19a}, where it is shown that the capacity-achieving distribution under an average-power constraint only has countably infinite support, with a finite number of mass points in every bounded interval.

\subsection{Notation}
We denote random variables by uppercase letters such as $X$, $Y$, and $Z$.
For a discrete random variable $X$, we denote by $X(x)$ the probability that $X$ equals $x$, and the expected value of $X$ is denoted by $\E[X]$.
Moreover, we denote the Shannon entropy of a discrete random variable $X$ by $H(X)$, and the Kullback-Leibler divergence between two discrete random variables $X$ and $Y$ by $\KL(X\|Y)$.
The natural logarithm is denoted by $\ln$, and we measure capacity in nats/channel use.

\subsection{Organization}
We describe our new upper bound on $C(\lambda,\mu)$ along with its proof in Section~\ref{sec:main}.
Then, we compare our upper bound with previously known bounds on $C(\lambda,\mu)$ and $C(\lambda,\mu,A)$ in Section~\ref{sec:compbounds}.

\section{The main result}\label{sec:main}

In this paper, we prove the following theorem, which yields a significantly improved non-asymptotic upper bound on the capacity of the DTP channel with constant dark current $\lambda$ and an average-power constraint $\mu$ in non-asymptotic regimes of $\mu$.
\begin{thm}\label{thm:UB}
For every $\mu,\lambda\geq 0$ we have
\begin{equation}\label{eq:UBdarkcurrent}
    C(\lambda,\mu)\leq \ln\left(\delta_\lambda+\frac{1}{\sqrt{2e}}\left(\frac{1}{\sqrt{1-q_{\lambda,\mu}}}-1\right)\right)-(\mu+\lambda)\ln q_{\lambda,\mu},
\end{equation}
with
\begin{equation*}
    \delta_\lambda=\exp(-\lambda e^\lambda E_1(\lambda)),
\end{equation*}
where $E_1(z)=\int_1^\infty\frac{e^{-zt}}{t}dt$ is the exponential integral function (with the convention that $0E_1(0)=0$), and $q_{\lambda,\mu}$ defined as
\begin{equation*}
    q_{\lambda,\mu}=1-\frac{1}{1+e^{1+\gamma}(\mu+\lambda)+\frac{2-e^{1+\gamma}}{1+\mu+\lambda}(\mu+\lambda)^2},
\end{equation*}
where $\gamma\approx0.5772$ is the Euler-Mascheroni constant.
\end{thm}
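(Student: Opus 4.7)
The approach I would take is via the convex duality upper bound for channel capacity: for every distribution $R$ on $\N$ and every admissible input distribution $X$,
\[
I(X;Y_X)\leq \E_X[\KL(\Poi_{\lambda+X}\|R)],
\]
so choosing a good $R$ reduces the capacity bound to a one-dimensional estimate. Following the strategy of~\cite{CR19a} for the $\lambda=0$ regime, the plan is to use an $R$ of the form
\[
R(y) = \frac{1}{Z_{\lambda,q}}\, q^{y}\, w_{\lambda}(y),
\]
with $q=q_{\lambda,\mu}\in(0,1)$ a scalar parameter and $w_{\lambda}$ a ``modified digamma'' weight designed so that (i) the geometric factor $q^{y}$ produces the term $-(\mu+\lambda)\ln q$ in~\eqref{eq:UBdarkcurrent} through $\E_{\Poi_{\lambda+X}}[Y]=\lambda+\E[X]$, and (ii) the weight $w_\lambda$ absorbs the entropy of $\Poi_{\lambda+x}$ via a Stirling-type bound so that $\KL(\Poi_{\lambda+x}\|R)$ is controlled by a concave, non-decreasing function of $x$.

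Expanding the KL divergence,
\[
\KL(\Poi_{\lambda+x}\|R) = -H(\Poi_{\lambda+x}) + \ln Z_{\lambda,q} - (\lambda+x)\ln q - \E_{Y\sim \Poi_{\lambda+x}}[\ln w_{\lambda}(Y)].
\]
Writing $-H(\Poi_{\lambda+x})=-(\lambda+x)+(\lambda+x)\ln(\lambda+x)-\E[\ln Y!]$ and using Stirling's bound $y!\geq\sqrt{2\pi y}(y/e)^{y}$ for $y\geq 1$, the $\ln(\lambda+x)$ growth cancels against a matching logarithmic growth built into $w_\lambda$. In the $\lambda=0$ case of~\cite{CR19a}, the analogous choice of $w_{0}$ renders the combined expression an explicit concave function of $x$, and the outer Jensen step in $\E_{X}[\cdot]$ bounds it by its value at $\E[X]=\mu$. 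I would follow the same structure here, with $w_\lambda$ obtained by the natural shift $y\mapsto y+\lambda$ inside the harmonic/digamma-type weight.

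The key new ingredient is computing $Z_{\lambda,q}=\sum_{y\geq 0}q^{y}w_{\lambda}(y)$. This sum splits into a $y=0$ contribution proportional to $\delta_{\lambda}$ and a geometric-like tail proportional to $\tfrac{1}{\sqrt{2e}}(\tfrac{1}{\sqrt{1-q}}-1)$ arising from the Stirling weight, matching the two summands inside $\ln(\cdot)$ in~\eqref{eq:UBdarkcurrent}. The exponential integral enters through the harmonic-number representation $H_{y}=\int_{0}^{1}\frac{1-t^{y}}{1-t}\,dt$: after the $\lambda$-shift, one encounters
\[
\int_{0}^{\infty}\frac{e^{-\lambda s}}{s+1}\,ds = e^{\lambda}E_{1}(\lambda),
\]
so that exponentiating the resulting $\lambda$-dependent correction produces exactly $\delta_{\lambda}=\exp(-\lambda e^{\lambda}E_{1}(\lambda))$. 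Once $Z_{\lambda,q}$ is in hand, the explicit formula for $q_{\lambda,\mu}$ in the statement arises as the minimizer (or near-minimizer chosen for closed form) of the resulting bound, generalizing the optimizer of~\cite{CR19a} under the substitution $\mu\mapsto \mu+\lambda$ inside the quadratic-in-$\mu$ portion of $q$.

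The main obstacle is ensuring that the composed bound is genuinely concave and non-decreasing in $x$, so that the outer Jensen step is valid with essentially no slack: the dark-current shift interacts nontrivially with the $\lambda$-dependent weight $w_{\lambda}$, and a careless choice of $w_\lambda$ easily introduces convex contributions that would spoil the inequality. A natural sanity check, which simultaneously pins down the correct form of $w_\lambda$, is the smooth reduction to~\eqref{eq:best0dark} as $\lambda\to 0$: since $\lambda e^{\lambda}E_{1}(\lambda)\to 0$ (so $\delta_\lambda\to 1$) and $q_{\lambda,\mu}\to q_{0,\mu}$, the bound should degenerate continuously to the known $\lambda=0$ bound, and tracking the Stirling constants (in particular the $1/\sqrt{2e}$ factor) precisely through the calculation is the most delicate numerical step.
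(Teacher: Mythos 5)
Your proposal shares the paper's high-level framework (convex duality, picking a digamma-type output distribution $R$, and a near-optimal $q$ chosen as a $(\mu+\lambda)$-analogue of the $\lambda=0$ choice), but the specific construction you sketch diverges from the paper's in a way that matters, and you omit the step on which the whole argument hinges.

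\paragraph{Where the proposal diverges.} You propose building $w_\lambda$ by the shift $y\mapsto y+\lambda$ inside the digamma-type weight and then recovering $\delta_\lambda$ from a harmonic-number integral. That is not what the paper does, and the shift would break the one identity that makes this analysis tractable: for the \emph{unshifted} digamma distribution $Y^{(q)}$ with $g(y)=y\psi(y)$, one has the exact formula $\KL(\Poi_z\|Y^{(q)})=-\ln y_0 - z\ln q - zE_1(z)$ (Lemma~\ref{lem:origKL}), which collapses all the Stirling/entropy terms into $zE_1(z)$. Replacing $g(y)$ by a $\lambda$-shifted version destroys that collapse, because $\E_{\Poi_z}[\ln\Gamma(Y+\lambda+1)]$ and $\E_{\Poi_z}[(Y+\lambda)\psi(Y+\lambda)]$ have no comparable closed form, so you would not land on the stated expression. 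The paper instead keeps the weight exactly as it is for $y\ge 1$ and changes \emph{only} the mass at $y=0$, setting $Y^{(q)}_\delta(0)=\alpha\delta$ and renormalizing; this adds precisely the term $-e^{-z}\ln\delta$ to the KL divergence (since $\Poi_z(0)=e^{-z}$), giving
\[
\KL(\Poi_z\|Y^{(q)}_\delta)=-\ln\alpha - z\ln q - e^{-z}\ln\delta - zE_1(z).
\]
The choice $\delta_\lambda=\exp(-\lambda e^\lambda E_1(\lambda))$ is then forced by requiring the KL-gap $\Delta_\lambda(z)=zE_1(z)-\lambda e^{\lambda-z}E_1(\lambda)$ to vanish at $z=\lambda$. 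Your proposal names the right object ($\delta_\lambda$) and the right special function ($e^\lambda E_1(\lambda)$), but via a mechanism that would not actually produce it.

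\paragraph{The missing step.} You flag concavity-plus-Jensen as "the main obstacle" but offer no way to resolve it. In fact the paper does not rely on concavity at all: Lemma~\ref{lem:duality} needs an \emph{affine} upper bound $\KL(\Poi_z\|Y)\le az+b$ valid for all $z\ge\lambda$, and the entire technical burden is to verify that $\Delta_\lambda(z)\ge 0$ on $[\lambda,\infty)$, i.e., that $zE_1(z)\ge \lambda e^{\lambda-z}E_1(\lambda)$. After multiplying through by $e^z$, this reduces to showing $f(z)=ze^zE_1(z)$ is non-decreasing (Lemma~\ref{lem:infklgap}), which the paper establishes via the bound $e^zE_1(z)>\tfrac12\ln(1+2/z)$ together with $\ln(1+x)\ge \tfrac{2x}{2+x}$. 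This is the crux of the argument and is absent from your sketch. Without it, you have no proof that the dual constraint $\KL(\Poi_z\|R)\le az+b$ holds on $z\ge\lambda$, so the duality bound is not justified. The remaining piece — $-\ln\alpha \le \ln\bigl(\delta_\lambda + \tfrac{1}{\sqrt{2e}}(\tfrac{1}{\sqrt{1-q}}-1)\bigr)$ — follows from $1/\alpha = 1/y_0 -1 + \delta_\lambda$ together with the known bound $1/y_0\le 1+\tfrac{1}{\sqrt{2e}}(\tfrac{1}{\sqrt{1-q}}-1)$, which matches your intuition about where the $1/\sqrt{2e}$ term comes from, but is the easy part.
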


\begin{remark}
Although the upper bound from Theorem~\ref{thm:UB} does not have a closed-form expression since it features the exponential integral function (which is nevertheless easy to compute numerically), we can derive a good closed-form and elementary upper bound on $C(\mu,\lambda)$ by noting that~\cite[Section 5.1.20]{AS65} and~\cite[Theorem 2]{Alz97} give the lower bound
\begin{equation*}
    e^\lambda E_1(\lambda)\geq \max\left(\frac{1}{2}\ln(1+2/\lambda),-e^\lambda\ln\left(1-e^{-\lambda e^\gamma}\right)\right)
\end{equation*}
for all $\lambda>0$, and thus
\begin{equation}\label{eq:elemUBdelta}
    \delta_\lambda\leq \min\left((1+2/\lambda)^{-\lambda/2},(1-e^{-\lambda e^\gamma})^{\lambda e^\lambda}\right),
\end{equation}
with the right hand side expression in the minimum above being better for small $\lambda$ (e.g., $\lambda<1/2$).
When $\lambda$ is small, this elementary upper bound sharply approaches the upper bound from Theorem~\ref{thm:UB}, and overall it improves on previously known bounds whenever $\mu$ is not small compared to $\lambda$.
\end{remark}

As with most previous capacity upper bounds for the DTP channel, we derive Theorem~\ref{thm:UB} with the help of a general convex duality framework, which we state below in a specialized form for the DTP channel.
This framework was originally derived in~\cite{Che19} and has also been used to derive the state-of-the-art upper bound on $C(\mu)$~\cite{CR19a}.
As discussed in~\cite{CR19a}, it is equivalent to other existing frameworks (e.g., see~\cite{LM03,Mar07}).
\begin{lem}[\cite{Che19,CR19a}]\label{lem:duality}
Suppose that there exist constants $a\in\R^+_0$, $b\in \R$, and a distribution $Y$ supported in $\N$ such that
\begin{equation}\label{eq:KLY}
    \KL(\Poi_z\|Y)\leq az+b
\end{equation}
for all $z\geq \lambda$, where $\KL(\Poi_z\|Y)$ denotes the Kullback-Leibler divergence between the Poisson distribution with mean $z$, denoted by $\Poi_z$, and $Y$.
Then, we have
\begin{equation*}
    C(\lambda,\mu)\leq a(\mu+\lambda)+b
\end{equation*}
for all $\lambda,\mu\geq 0$.
\end{lem}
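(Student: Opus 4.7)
The plan is to follow the standard convex-duality argument for capacity upper bounds, exploiting the variational characterization of mutual information in terms of the Kullback-Leibler divergence to an arbitrary reference output distribution.

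Fix any admissible input $X$ satisfying $X \geq 0$ and $\E[X] \leq \mu$, and let $Y_X$ denote the corresponding output, so that conditional on $X = x$ the output has distribution $\Poi_{\lambda+x}$. The first step is to invoke the well-known ``golden identity'' (sometimes attributed to Topsøe): for any distribution $Q$ supported on $\N$,
\begin{equation*}
    I(X;Y_X) \;=\; \E_X\bigl[\KL(\Poi_{\lambda+X}\,\|\,Q)\bigr] \;-\; \KL(Y_X\,\|\,Q).
\end{equation*}
This is a direct computation after writing both sides as expectations of log-ratios of the joint and product distributions. Since $\KL(Y_X\,\|\,Q) \geq 0$ by Gibbs' inequality, we obtain the one-sided bound
\begin{equation*}
    I(X;Y_X) \;\leq\; \E_X\bigl[\KL(\Poi_{\lambda+X}\,\|\,Q)\bigr],
\end{equation*}
valid for every choice of reference distribution $Q$ on $\N$.

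Next, I would choose $Q = Y$, the distribution furnished by the hypothesis. Since $X \geq 0$ almost surely, we have $\lambda + X \geq \lambda$ almost surely, and so hypothesis~\eqref{eq:KLY} applies pointwise with $z = \lambda + X$, yielding
\begin{equation*}
    \KL(\Poi_{\lambda+X}\,\|\,Y) \;\leq\; a(\lambda+X) + b
\end{equation*}
almost surely. Taking expectations and using $\E[X]\leq \mu$ gives
\begin{equation*}
    I(X;Y_X) \;\leq\; a\lambda + a\,\E[X] + b \;\leq\; a(\mu+\lambda) + b.
\end{equation*}
Finally, since this bound is uniform over all admissible input distributions $X$, taking the supremum over such $X$ on the left-hand side yields $C(\lambda,\mu) \leq a(\mu+\lambda) + b$, which is the desired conclusion.

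There is no real obstacle here: the only subtlety worth checking is that the golden identity is in fact well-defined, i.e.\ that the KL divergences appearing on the right-hand side are not ill-posed. This reduces to verifying that $Y_X$ is absolutely continuous with respect to $Y$ whenever the right-hand side is finite, which follows because $Y$ is required to be supported on all of $\N$ implicitly (if $Y(y) = 0$ for some $y$ in the support of some $\Poi_z$ with $z \geq \lambda$, then $\KL(\Poi_z\,\|\,Y) = +\infty$, trivializing the hypothesis); under the hypothesis the bound is meaningful, and the argument above goes through verbatim.
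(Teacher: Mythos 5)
Your proof is correct and is exactly the standard convex-duality argument used in the cited sources [Che19, CR19a]: decompose mutual information via the identity $I(X;Y_X)=\E_X[\KL(\Poi_{\lambda+X}\|Q)]-\KL(Y_X\|Q)$, drop the nonnegative second term, apply the linear bound~\eqref{eq:KLY} with $Q=Y$, and use $a\geq 0$ together with $\E[X]\leq\mu$. Note that the paper itself states Lemma~\ref{lem:duality} as a citation rather than proving it inline, so there is no in-paper proof to compare against, but your argument matches the one in the references and correctly handles the shift by $\lambda$ and the support/absolute-continuity point.
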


An important quantity related to Lemma~\ref{lem:duality} is the \emph{KL-gap} of the distribution $Y$ with respect to the line $az+b$, which quantifies the sharpness with which the constraint~\eqref{eq:KLY} is satisfied, and is defined as
\begin{equation*}
    \Delta(z)=az+b-\KL(\Poi_z\|Y).
\end{equation*}
From previous applications of Lemma~\ref{lem:duality}~\cite{Che19,CR19a,CR19b}, it is apparent that designing candidate distributions $Y$ so that the associated KL-gap is as small as possible leads to sharper capacity upper bounds.
We follow this approach in this work as well.

\subsection{The digamma distribution}
To give some context, we start by presenting the family of digamma distributions originally introduced in~\cite{Che19} in order to derive capacity upper bounds for channels with synchronization errors, and which was also used to derive the state-of-the-art upper bounds on $C(\mu)$ in~\cite{CR19a}.
This is a family of distributions $\Yq$ parameterized by $q\in(0,1)$, and each distribution $\Yq$ satisfies
\begin{equation}\label{eq:digammadist}
    \Yq(y)=y_0q^y \frac{e^{g(y)-y}}{y!},\quad y=0,1,2,\dots,
\end{equation}
with $y_0$ the normalizing factor and $g(y)$ defined as
\begin{equation*}
    g(y)=\begin{cases}
    y\psi(y),&\text{ if $y>0$,}\\
    0,&\text{ if $y=0$,}
    \end{cases}
\end{equation*}
where $\psi$ denotes the digamma function, which for positive integer argument $y$ satisfies
\begin{equation*}
    \psi(y)=-\gamma+\sum_{i=1}^{y-1}1/i,
\end{equation*}
where we recall that $\gamma\approx 0.5772$ is the Euler-Mascheroni constant.
Using well-known properties of some special functions, it is possible to compute $\KL(\Poi_z\|\Yq)$ exactly, as stated in the following lemma.
\begin{lem}[\cite{Che19,CR19a}]\label{lem:origKL}
For every $z\geq 0$ and $q\in(0,1)$ it holds that
\begin{equation*}
    \KL(\Poi_z\|\Yq)=-\ln y_0-z\ln q - zE_1(z),
\end{equation*}
where we recall that $E_1(z)=\int_1^\infty \frac{e^{-zt}}{t}dt$ is the exponential integral function, with the convention that $0\cdot E_1(0)=0$.
\end{lem}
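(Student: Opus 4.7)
The plan is to expand $\KL(\Poi_z\|\Yq)$ directly from the definitions of the two distributions, simplify the linear-in-$y$ terms using $\E[Y_z]=z$ for $Y_z\sim\Poi_z$, and reduce the whole calculation to evaluating a single Poisson-averaged special-function quantity. Writing out $\ln\Poi_z(y)-\ln\Yq(y)=-z+y\ln z-\ln y_0-y\ln q-g(y)+y$, taking the expectation under $\Poi_z$, and using that the $\ln(y!)$ contributions cancel and that $\E[Y_z]=z$, I would arrive at
\begin{equation*}
\KL(\Poi_z\|\Yq)=z\ln z-\ln y_0-z\ln q-\E[g(Y_z)].
\end{equation*}
So the entire content of the lemma lies in showing that $\E[g(Y_z)]=z\ln z+zE_1(z)$.

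Next I would exploit the standard Poisson shift identity $y\,\Poi_z(y)=z\,\Poi_z(y-1)$, which together with $g(0)=0$ and $g(y)=y\psi(y)$ for $y\geq 1$ yields
\begin{equation*}
\E[g(Y_z)]=\sum_{y\geq 1}\Poi_z(y)\,y\psi(y)=z\sum_{k\geq 0}\Poi_z(k)\,\psi(k+1)=z\,\E[\psi(K+1)],\qquad K\sim\Poi_z.
\end{equation*}
The remaining task is therefore to prove the identity $\E[\psi(K+1)]=\ln z+E_1(z)$, and this is the main obstacle, since it is where the exponential integral must be brought in.

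To handle this step, my plan is to invoke the classical integral representation $\psi(n+1)=-\gamma+\int_0^1\frac{1-t^n}{1-t}\,dt$, which holds for every nonnegative integer $n$. Swapping the expectation with the integral (justified by nonnegativity via Tonelli) and using the Poisson probability generating function $\E[t^K]=e^{-z(1-t)}$, I obtain
\begin{equation*}
\E[\psi(K+1)]=-\gamma+\int_0^1\frac{1-e^{-z(1-t)}}{1-t}\,dt.
\end{equation*}
The substitution $v=z(1-t)$ turns the integral into $\int_0^z\frac{1-e^{-v}}{v}\,dv$, and then the standard series identity $\int_0^z\frac{1-e^{-v}}{v}\,dv=\gamma+\ln z+E_1(z)$ (which can be verified by comparing the power series of both sides, using $E_1(z)=-\gamma-\ln z-\sum_{k\geq 1}(-z)^k/(k\cdot k!)$) finishes the job. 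The convention $0\cdot E_1(0)=0$ takes care of the $z=0$ boundary case, where both sides of the target equation vanish.

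Plugging $\E[g(Y_z)]=z\ln z+zE_1(z)$ back into the expression from the first paragraph cancels the two $z\ln z$ terms and leaves exactly $-\ln y_0-z\ln q-zE_1(z)$, as claimed. The only delicate point in the whole argument is the intermediate identity for $\E[\psi(K+1)]$; everything else is either a definitional expansion or the Poisson moment-shift trick.
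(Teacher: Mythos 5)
Your proof is correct. The paper cites Lemma~\ref{lem:origKL} from \cite{Che19,CR19a} without presenting a proof, but the identity it tacitly relies on when deriving~\eqref{eq:KLdelta}, namely $zE_1(z)=H(\Poi_z)-\E_{y\sim\Poi_z}[\ln(y!)+y-g(y)]$, is (after using $H(\Poi_z)=z-z\ln z+\E[\ln(Y_z!)]$) exactly the statement $\E[g(Y_z)]=z\ln z+zE_1(z)$ that your argument reduces to, so you are attacking the same core calculation. What you add is a self-contained derivation of the special-function identity $\E_{K\sim\Poi_z}[\psi(K+1)]=\ln z+E_1(z)$ via the Poisson shift $y\,\Poi_z(y)=z\,\Poi_z(y-1)$, the integral representation $\psi(n+1)=-\gamma+\int_0^1\frac{1-t^n}{1-t}\,dt$, the Poisson probability generating function, and the standard series $\int_0^z\frac{1-e^{-v}}{v}\,dv=\gamma+\ln z+E_1(z)$; the paper(s) instead invoke this as a ``well-known property of special functions.'' The Tonelli interchange is justified since $\frac{1-t^n}{1-t}\ge 0$ on $[0,1]$, and the $z=0$ boundary case reduces on both sides to $-\ln y_0$ under the stated convention. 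The only minor cosmetic point is that the displayed expression $\ln\Poi_z(y)-\ln\Yq(y)=-z+y\ln z-\ln y_0-y\ln q-g(y)+y$ already assumes the $\ln(y!)$ cancellation has happened (and implicitly reads $y\ln z=0$ at $y=0$ even when $z=0$), but that is stated in words just afterward and does not affect correctness.
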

From Lemma~\ref{lem:origKL}, we conclude that each distribution $\Yq$ satisfies
\begin{equation}\label{eq:KLdigamma}
    \KL(\Poi_z\|\Yq)\leq -\ln y_0-z\ln q
\end{equation}
for all $z\geq 0$, with associated KL-gap $\Delta(z)$ satisfying
\begin{equation}\label{eq:KLgapdigamma}
    \Delta(z)=-\ln y_0-z\ln q-\KL(\Poi_z\|\Yq)=zE_1(z)\geq 0.
\end{equation}
Note that $\Delta(z)$ is small around $z=0$ (with $\Delta(0)=0$) and $\Delta(z)\to 0$ exponentially fast as $z\to\infty$.
Combining~\eqref{eq:KLdigamma} with Lemma~\ref{lem:duality}, we conclude that
\begin{equation}\label{eq:infUBdigamma}
    C(\mu)\leq \inf_{q\in(0,1)}[-\ln y_0-\mu\ln q].
\end{equation}
Then, choosing $q$ appropriately as a function of $\mu$ and using some known upper bounds on $-\ln y_0$ as a function of $q$ yields a good easy-to-compute closed-form upper bound on $C(\mu)$.

Given the above, it is natural to wonder whether one can apply the digamma distribution in a straightforward way to obtain better bounds on $C(\lambda,\mu)$ for $\lambda>0$.
However, this cannot be done since, with Lemma~\ref{lem:duality} in view, we have $a(\mu+\lambda)+b>a\mu+b$ for any $a\in\R^+_0$ and $b\in\R$.
Therefore, any upper bound obtained for $C(\lambda,\mu)$ via the digamma distributions above will be strictly larger than~\eqref{eq:infUBdigamma}, and thus trivial. 
In the next section, we show how to modify the family of digamma distributions as a function of the dark current $\lambda$ in order to obtain significantly improved easy-to-compute closed-form upper bounds on $C(\lambda,\mu)$.

\subsection{The modified digamma distribution}

In this section, we design and study a modified version of the family of digamma distributions defined in~\eqref{eq:digammadist}.
Our modification consists in changing the value of the digamma distribution $\Yq$ at $y=0$ and renormalizing the distribution.
More precisely,
for $\delta\in(0,1]$ we consider the \emph{modified digamma distribution} $\Yq_\delta$ defined as
\begin{equation*}
    \Yq_\delta(y)=\begin{cases}
    \alpha\delta,&\text{ if $y=0$,}\\
    \alpha\Yq(y)/y_0,&\text{ if $y>0$,}
    \end{cases}
\end{equation*}
where $\alpha$ is the new normalizing factor satisfying
\begin{equation*}
    1/\alpha=1/y_0+\delta-1,
\end{equation*}
where we have used the fact that $\Yq(0)/y_0=1$.
An analogous approach was used by the authors in~\cite{CR19b} to derive improved capacity upper bounds on channels with synchronization errors. Moreover, we note that a similar approach was employed by Martinez~\cite{Mar07} in the special case where $\lambda=0$ to improve the upper bound given by his candidate distribution, which originally had KL-gap bounded well away from $0$ everywhere.
However, no rigorous proof is given in~\cite{Mar07} to show that this approach indeed works in that special case, with only numerical evidence being presented.

We begin by computing $\KL(\Poi_z\|\Yqd)$ for general $q\in(0,1)$ and $\delta\in(0,1]$, which has a simple expression in terms of the original KL-gap $\Delta$ of the digamma distribution $\Yq$ defined in~\eqref{eq:KLgapdigamma}.
We have
\begin{align}
    \KL(\Poi_z\|\Yqd)&=-H(\Poi_z)-\sum_{y=0}^\infty \Poi_z(y)\ln \Yqd(y)\nonumber\\
    &=-\ln\alpha-\Poi_z(0)\ln\delta-H(\Poi_z)+\sum_{y=1}^\infty \Poi_z(y)(\ln(y!)+y-g(y))\nonumber\\
    &=-\ln\alpha-\Poi_z(0)\ln\delta-H(\Poi_z)+\E_{y\sim\Poi_z}[\ln(y!)+y-g(y)-y\ln q]\nonumber\\
    &=-\ln\alpha-z\ln q-e^{-z}\ln\delta-H(\Poi_z)+\E_{y\sim\Poi_z}[\ln(y!)+y-g(y)]\nonumber\\
    &=-\ln\alpha-z\ln q-e^{-z}\ln\delta-zE_1(z).\label{eq:KLdelta}
\end{align}
The third equality holds because the term inside the sum is $0$ at $y=0$.
The fourth equality is true since $\E[\Poi_z]=z$ and $\Poi_z(0)=e^{-z}$.
The fifth equality follows from the fact that
\begin{equation*}
    zE_1(z)=\Delta(z)=H(\Poi_z)-\E_{y\sim\Poi_z}[\ln(y!)+y-g(y)].
\end{equation*}

Given $\lambda\geq 0$, consider now the choice
\begin{equation}\label{eq:deltalambda}
    \delta_\lambda=\exp(-\lambda e^\lambda E_1(\lambda)).
\end{equation}
Then, we have
\begin{equation*}
    -e^{-z}\ln\delta_\lambda= \lambda e^{\lambda-z}E_1(\lambda).
\end{equation*}
Consequently, by defining $\Yq_\lambda=\Yq_{\delta_\lambda}$ and using~\eqref{eq:KLdelta} we have
\begin{equation}\label{eq:KLdeltalambda}
    \KL(\Poi_z\|\Yq_\lambda)=-\ln\alpha-z\ln q+\lambda e^{\lambda-z}E_1(\lambda)-zE_1(z).
\end{equation}
We now claim that the following result holds.
\begin{thm}\label{thm:moddigamma}
For every $z\geq \lambda$ and $q\in(0,1)$ we have
\begin{equation*}
    \KL(\Poi_z\|\Yq_\lambda)\leq -\ln\alpha-z\ln q,
\end{equation*}
with KL-gap $\Delta_\lambda$ satisfying
\begin{equation*}
    \Delta_\lambda(z)=-\ln\alpha-z\ln q-\KL(\Poi_z\|\Yq_\lambda)=zE_1(z)-\lambda e^{\lambda-z}E_1(\lambda)< \Delta(z).
\end{equation*}
\end{thm}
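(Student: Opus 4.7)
The plan is to extract the KL-gap identity directly from equation~\eqref{eq:KLdeltalambda}, and then reduce both the strict inequality and the non-negativity claim to a single monotonicity statement about an explicit one-variable function.

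The identity is a trivial rearrangement: by definition $\Delta_\lambda(z) = -\ln\alpha - z\ln q - \KL(\Poi_z\|\Yq_\lambda)$, and substituting~\eqref{eq:KLdeltalambda} for the KL divergence cancels the $-\ln\alpha - z\ln q$ terms and leaves
\begin{equation*}
\Delta_\lambda(z) = zE_1(z) - \lambda e^{\lambda-z}E_1(\lambda).
\end{equation*}
The strict bound $\Delta_\lambda(z) < \Delta(z) = z E_1(z)$ is then immediate from $\lambda e^{\lambda-z}E_1(\lambda) > 0$ for $\lambda > 0$ (the $\lambda = 0$ case collapses to the original digamma setting via the convention $0\cdot E_1(0) = 0$). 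All of the substance therefore lies in proving $\Delta_\lambda(z) \geq 0$ for $z \geq \lambda$.

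Multiplying through by $e^z$, non-negativity is equivalent to $z e^z E_1(z) \geq \lambda e^\lambda E_1(\lambda)$ whenever $z \geq \lambda$, so the entire theorem reduces to showing that $h(z) := z e^z E_1(z)$ is non-decreasing on $[0,\infty)$. The approach would be to find an integral representation in which $z$ appears only through a manifestly monotone factor. Starting from $E_1(z) = \int_1^\infty e^{-zt}/t\,dt$, the substitution $s = t - 1$ yields $e^z E_1(z) = \int_0^\infty e^{-zs}/(s+1)\,ds$, and a further substitution $u = z s$ inside the integral gives
\begin{equation*}
h(z) = \int_0^\infty \frac{e^{-u}}{1 + u/z}\,du.
\end{equation*}
For every fixed $u > 0$ the integrand is strictly increasing in $z$ on $(0,\infty)$ and is dominated by the integrable envelope $e^{-u}$, so pointwise comparison inside the integral shows that $h$ is itself strictly increasing on $(0,\infty)$. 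Combined with $h(0) = 0$ (by convention), this yields $h(z) \geq h(\lambda)$ for all $z \geq \lambda \geq 0$, as required.

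I expect the only real obstacle to be locating the right integral representation in the last step. A direct attempt via differentiation would force one to contend with $E_1'(z) = -e^{-z}/z$ and to verify that $(1+z)e^z E_1(z) \geq 1$, which, while true, is not at all transparent. The representation above sidesteps this entirely, since the monotonicity of the integrand in $z$ is visible by inspection.
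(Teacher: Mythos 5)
Your proof is correct and reaches the same reduction as the paper (everything hinges on showing $f(z)=ze^zE_1(z)$ is non-decreasing on $(0,\infty)$), but your treatment of that monotonicity is genuinely different and arguably cleaner. The paper differentiates, getting $f'(z)=(1+z)e^zE_1(z)-1$, then invokes the cited lower bound $e^zE_1(z)>\tfrac{1}{2}\ln(1+2/z)$ from Abramowitz--Stegun together with the elementary inequality $\ln(1+x)\geq\frac{2x}{2+x}$ to conclude $f'(z)\geq 0$. Your route instead rewrites
\begin{equation*}
ze^zE_1(z)=\int_0^\infty\frac{e^{-u}}{1+u/z}\,du
\end{equation*}
via the substitutions $t\mapsto s+1$ and $s\mapsto u/z$, after which the integrand is visibly increasing in $z$ for each fixed $u>0$ and is dominated by $e^{-u}$, so pointwise comparison inside the integral does all the work. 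What this buys: the argument is self-contained (no external special-function bound), it avoids the derivative/bound chain entirely, and it gives strict monotonicity for free, which you use to justify the strict inequality $\Delta_\lambda(z)<\Delta(z)$ when $\lambda>0$. Your observation that the strict inequality degenerates to equality when $\lambda=0$ is a correct reading of the statement that the paper glosses over. One small technicality worth making explicit: to pass from ``the integrand is strictly increasing in $z$'' to ``$h$ is strictly increasing,'' simply compare $h(z_1)$ and $h(z_2)$ for $z_1<z_2$ by integrating the strict pointwise inequality over $u\in(0,\infty)$; no differentiation under the integral sign is required, and domination by $e^{-u}$ guarantees finiteness at both endpoints.
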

Note that we have $\Delta_\lambda(\lambda)=0$ and $\Delta_\lambda(z)\to 0$ exponentially fast when $z\to\infty$ (in general, $\Delta_\lambda$ is always smaller than $\Delta$).
Theorem~\ref{thm:moddigamma} and the observations above justify our choice of $\delta_\lambda$ in~\eqref{eq:deltalambda}; With this choice, we obtain a new family of modified digamma distributions $\Yql$ with KL-gap $\Delta_\lambda$ that is always smaller than the original KL-gap $\Delta$ of the digamma distributions.
Moreover, the KL-gap $\Delta_\lambda$ equals $0$ at $z=\lambda$ and is significantly smaller than $\Delta$ around $z=\lambda$.
Given the above, intuitively we expect to obtain a sharper upper bound on $C(\lambda,\mu)$ using the family of modified digamma distributions.

Finally, Theorem~\ref{thm:moddigamma} is an immediate consequence of~\eqref{eq:KLdeltalambda} and the following lemma.
\begin{lem}\label{lem:infklgap}
For every $z\geq \lambda$ we have
\begin{equation*}
    z E_1(z)- e^{\lambda-z}\lambda E_1(\lambda)\geq 0.
\end{equation*}
\end{lem}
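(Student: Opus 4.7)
My plan is to reduce the inequality to a monotonicity statement about a single auxiliary function. Rewrite the claim $zE_1(z) \geq e^{\lambda-z}\lambda E_1(\lambda)$ as
\begin{equation*}
    g(z) := e^z z E_1(z) \geq e^\lambda \lambda E_1(\lambda) = g(\lambda),
\end{equation*}
so it suffices to show that $g$ is non-decreasing on $[0,\infty)$; the inequality for $z \geq \lambda$ then follows immediately. The edge case $\lambda = 0$ is trivial under the convention $0\cdot E_1(0)=0$, since $zE_1(z)\geq 0$ always.

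To prove monotonicity, I would rewrite $g$ as an integral in which the integrand is manifestly monotone in $z$. Starting from $E_1(z) = \int_1^\infty e^{-zt}/t \, dt$ and substituting $s = zt$ yields $E_1(z) = \int_z^\infty e^{-s}/s \, ds$. Multiplying by $e^z$ and then substituting $u = s - z$ gives
\begin{equation*}
    g(z) = e^z z E_1(z) = z\int_z^\infty \frac{e^{-(s-z)}}{s}\, ds = \int_0^\infty \frac{z}{z+u}\, e^{-u}\, du.
\end{equation*}
For each fixed $u \geq 0$, the map $z \mapsto z/(z+u) = 1 - u/(z+u)$ is non-decreasing in $z > 0$, so by dominated convergence (the integrand is bounded by $e^{-u}$, which is integrable) $g$ itself is non-decreasing on $(0,\infty)$. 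Combined with the observation that $g(z) \to 0$ as $z \to 0^+$ (which also matches the $\lambda = 0$ boundary case), this completes the argument.

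The main ``obstacle,'' really just a bookkeeping point, is choosing the right representation of $E_1$: the defining integral $\int_1^\infty e^{-zt}/t\, dt$ hides the monotonicity, and a naive derivative computation using $E_1'(z) = -e^{-z}/z$ leads to needing $e^z(z+1)E_1(z) \geq 1$, which is true but requires its own justification (e.g.\ via the lower bound $E_1(z) \geq e^{-z}\ln(1+1/z)$ and the elementary inequality $\ln w \geq 1 - 1/w$). The change of variables $u = s - z$ sidesteps this entirely by turning the claim into a pointwise monotonicity of a rational expression, which is what makes the proof short.
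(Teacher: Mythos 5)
Your proof is correct, and it shares the paper's high-level reduction (showing $g(z)=ze^zE_1(z)$ is non-decreasing) but establishes the monotonicity by a genuinely different and arguably cleaner route. The paper computes $g'(z)=(1+z)e^zE_1(z)-1$ and then needs two external ingredients to show $g'\ge 0$: the Abramowitz--Stegun lower bound $e^zE_1(z)>\tfrac12\ln(1+2/z)$ and the inequality $\ln(1+x)\ge 2x/(2+x)$. Your integral representation
\begin{equation*}
    g(z)=\int_0^\infty \frac{z}{z+u}\,e^{-u}\,du
\end{equation*}
makes the monotonicity manifest at the level of the integrand, with no special-function bounds needed, so it is more self-contained; the paper's version, on the other hand, yields the explicit auxiliary inequality $(1+z)e^zE_1(z)\ge 1$, which is reused implicitly when bounding $\delta_\lambda$ in the closed-form corollary. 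One small correction to your write-up: you do not need dominated convergence to pass from pointwise monotonicity of $z\mapsto z/(z+u)$ to monotonicity of $g$ --- for $z_1\le z_2$ the integrands satisfy $\tfrac{z_1}{z_1+u}e^{-u}\le \tfrac{z_2}{z_2+u}e^{-u}$ for every $u$, so $g(z_1)\le g(z_2)$ by monotonicity of the integral directly; DCT is only relevant if you want to justify differentiating under the integral sign or the boundary limit $g(0^+)=0$, neither of which is required here.
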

\begin{proof}
Multiplying both sides of the inequality above by $e^{z}$, we conclude that the desired inequality holds provided we can show that
\begin{equation*}
    z e^z E_1(z)\geq \lambda e^\lambda E_1(\lambda)
\end{equation*}
for all $z\geq \lambda$.
Equivalently, we must show that the function $f(z)=z e^z E_1(z)$ is non-decreasing when $z>0$.
Note that we have
\begin{equation*}
    f'(z)=(1+z)e^z E_1(z)-1
\end{equation*}
for every $z>0$, and we proceed to show that $f'(z)\geq 0$ for all $z>0$.
This implies the desired result.
According to~\cite[Section 5.1.20]{AS65}, we can lower bound $e^z E_1(z)$ as
\begin{equation*}
    e^z E_1(z)>\frac{1}{2}\ln(1+2/z)
\end{equation*}
for all $z>0$.
Therefore, in order to show that $f'(z)\geq 0$ it is enough to argue that
\begin{equation*}
    \frac{1+z}{2}\cdot\ln(1+2/z)\geq 1
\end{equation*}
for all $z>0$.
This follows from the fact that $\ln(1+x)\geq \frac{2x}{2+x}$ for all $x\geq 0$, and thus
\begin{align*}
    \frac{1+z}{2}\cdot\ln(1+2/z)\geq \frac{1+z}{2}\cdot\frac{4/z}{2+2/z}=1. \tag*{\qedhere}
\end{align*}
\end{proof}

\subsection{Proof of Theorem~\ref{thm:UB}}

In this section, we prove our main result (Theorem~\ref{thm:UB}) with the help of Lemma~\ref{lem:duality} and Theorem~\ref{thm:moddigamma}.
First, by combining Lemma~\ref{lem:duality} and Theorem~\ref{thm:moddigamma} we conclude that
\begin{equation}\label{eq:UB1lambda}
    C(\lambda,\mu)\leq \inf_{q\in(0,1)}[-\ln \alpha-(\mu+\lambda)\ln q].
\end{equation}
To obtain Theorem~\ref{thm:UB} from~\eqref{eq:UB1lambda}, we upper bound the term $-\ln\alpha$ by an easy-to-compute expression in terms of $\lambda$ and $q$, and then choose $q$ appropriately as a function of $\lambda$ and $\mu$.

From~\cite{Che19}, we have the following upper bound on $1/y_0$, where $y_0$ is the normalizing factor of the digamma distribution $\Yq$.
\begin{lem}[\cite{Che19}]
For every $q\in (0,1)$ we have
\begin{equation*}
    1/y_0\leq 1+\frac{1}{\sqrt{2e}}\left(\frac{1}{\sqrt{1-q}}-1\right).
\end{equation*}
\end{lem}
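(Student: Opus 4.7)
The strategy is to reduce the bound to a term-by-term comparison of power series in $q$, and then prove the resulting termwise inequality using standard identities for the gamma and digamma functions.

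First, since $\Yq$ is a probability distribution summing to $1$, one has the explicit series
\[
\frac{1}{y_0} = \sum_{y=0}^\infty q^y\,\frac{e^{g(y)-y}}{y!} = 1 + \sum_{y=1}^\infty q^y\,\frac{e^{y\psi(y)-y}}{y!},
\]
while the generalized binomial series yields
\[
1 + \frac{1}{\sqrt{2e}}\left(\frac{1}{\sqrt{1-q}}-1\right) = 1 + \frac{1}{\sqrt{2e}}\sum_{y=1}^\infty \binom{2y}{y}\frac{q^y}{4^y}.
\]
All coefficients above are non-negative and $q\in(0,1)$, so the lemma follows from the termwise inequality
\[
\frac{e^{y\psi(y)-y}}{y!}\;\leq\;\frac{1}{\sqrt{2e}}\cdot\frac{\binom{2y}{y}}{4^y}\qquad(\star)
\]
for every integer $y\geq 1$.

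Second, I would reshape $(\star)$ via the Legendre duplication formula $\Gamma(y+1/2)=\sqrt{\pi}\,(2y)!/(4^y y!)$, so that $\binom{2y}{y}/4^y = \Gamma(y+1/2)/(\sqrt{\pi}\,y!)$. Substituting and taking logarithms, $(\star)$ becomes
\[
F(y) := \ln\Gamma(y+\tfrac12) + y - y\,\psi(y) - \tfrac12\ln(2\pi e)\;\geq\;0,
\]
and it suffices to prove this for all real $y\geq 1$, the integer case being a specialization.

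Third, to establish $F(y)\geq 0$ on $[1,\infty)$ I would combine (i) an asymptotic computation, using Stirling's expansion $\ln\Gamma(z)=(z-\tfrac12)\ln z - z + \tfrac12\ln(2\pi) + O(1/z)$ together with $\psi(y)=\ln y - 1/(2y) + O(1/y^2)$, to check that $F(y)\to 0$ as $y\to\infty$ (in fact $F(y)\sim 1/(24y)$, which also explains why the constant $\sqrt{2e}$ is the right one); and (ii) monotonicity, namely $F'(y) = \psi(y+\tfrac12)-\psi(y)+1-y\,\psi'(y)\leq 0$ for $y\geq 1$, which together with the limit forces $F\geq 0$. The sign of $F'$ can be investigated through the integral representations $\psi(y+\tfrac12)-\psi(y)=\int_0^\infty e^{-yt}/(1+e^{-t/2})\,dt$ and $\psi'(y)=\int_0^\infty t\,e^{-yt}/(1-e^{-t})\,dt$, reducing the monotonicity claim to a pointwise comparison of Laplace-transform kernels in $t$.

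The main obstacle is the monotonicity step: since $F$ tends to zero only at rate $1/y$, the required derivative bound is asymptotically tight and leaves essentially no slack, so the kernel comparison must be carried out carefully and cannot tolerate crude estimates. If a clean analytic proof of $F'\leq 0$ proves too awkward, a viable fallback is to verify $(\star)$ numerically for all integers $y$ up to some moderate threshold $y_0$ and to close out the tail $y>y_0$ by combining a sharp Stirling-type lower bound on $\ln\Gamma(y+1/2)$ (such as Robbins' refinement) with the classical inequality $\psi(y)\leq \ln y - 1/(2y)$.
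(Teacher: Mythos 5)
Your overall strategy is sound: expanding both sides as power series in $q$ and reducing to the termwise inequality $(\star)$, rewriting $(\star)$ via the Legendre duplication formula as $F(y)\geq 0$, and then aiming to show $F(y)\to 0$ as $y\to\infty$ together with $F'(y)\leq 0$. The asymptotic computation $F(y)\sim 1/(24y)$ is correct and confirms that $\sqrt{2e}$ is exactly the right constant. The reduction of $1/y_0$ to a power series with term $q^y e^{g(y)-y}/y!$ and the matching of the $y=0$ terms are also correct.

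The gap is that you never actually prove $F'(y)\leq 0$: you state the integral representations for $\psi(y+\tfrac12)-\psi(y)$ and $\psi'(y)$ and say the claim ``reduces to a pointwise comparison of Laplace-transform kernels,'' then flag this as the main obstacle and retreat to a numerical fallback. That leaves the central step of the lemma unestablished. In fact, the worry about tightness is misplaced---the kernel inequality has a short closed-form proof. Integrating $y\psi'(y)=\int_0^\infty yt\,e^{-yt}/(1-e^{-t})\,dt$ by parts gives $y\psi'(y)=1+\int_0^\infty e^{-yt}\,\tfrac{d}{dt}\bigl(\tfrac{t}{1-e^{-t}}\bigr)\,dt$, hence
\[
F'(y)=\int_0^\infty e^{-yt}\left[\frac{1}{1+e^{-t/2}}-\frac{d}{dt}\left(\frac{t}{1-e^{-t}}\right)\right]dt,
\]
and the bracket is nonpositive for every $t>0$. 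Indeed, using $\tfrac{1}{1+e^{-t/2}}=\tfrac{1-e^{-t/2}}{1-e^{-t}}$ and $\tfrac{d}{dt}\bigl(\tfrac{t}{1-e^{-t}}\bigr)=\tfrac{1-e^{-t}-te^{-t}}{(1-e^{-t})^2}$, the required inequality is equivalent (after clearing the positive denominator $(1-e^{-t})^2$) to $(1-e^{-t/2})(1-e^{-t})\leq 1-e^{-t}-te^{-t}$. Substituting $u=e^{-t/2}\in(0,1)$, so $e^{-t}=u^2$ and $t=-2\ln u$, this simplifies to $2u\ln u-u^2+1\geq 0$, which holds because the left-hand side vanishes at $u=1$ and has derivative $2(\ln u-(u-1))\leq 0$ on $(0,1]$. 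Thus $F'\leq 0$ on $(0,\infty)$, and combined with $F(\infty)=0$ this forces $F\geq 0$, completing the argument. You should carry out this kernel comparison explicitly rather than deferring to numerics; once it is included, the proof is complete.
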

Recalling that $1/\alpha=1/y_0-1+\delta_\lambda$, we conclude that
\begin{equation}\label{eq:UBalpha}
    -\ln\alpha\leq \ln\left(\delta_\lambda+\frac{1}{\sqrt{2e}}\left(\frac{1}{\sqrt{1-q}}-1\right)\right)
\end{equation}
for every $q\in(0,1)$.
It remains now to choose $q=q_{\lambda,\mu}$ appropriately.
As discussed in~\cite{CR19a}, when $\lambda=0$ the choice
\begin{equation*}
    q_\mu=1-\frac{1}{1+e^{1+\gamma}\mu+\frac{2-e^{1+\gamma}}{1+\mu}\mu^2}
\end{equation*}
is close to optimal for all $\mu$, and leads to a significantly improved (and closed-form) upper bound on $C(\mu)$.
For the case where $\lambda>0$, we consider the direct extension $q_{\lambda,\mu}$ defined as
\begin{equation}\label{eq:qmulambda}
    q_{\lambda,\mu}=1-\frac{1}{1+e^{1+\gamma}(\mu+\lambda)+\frac{2-e^{1+\gamma}}{1+\mu+\lambda}(\mu+\lambda)^2}.
\end{equation}
Combining~\eqref{eq:UB1lambda},~\eqref{eq:UBalpha}, and~\eqref{eq:qmulambda} leads to Theorem~\ref{thm:UB}.

\section{Comparison between bounds}\label{sec:compbounds}

We present a comparison between the upper bound~\eqref{eq:UBdarkcurrent} that we have derived via the modified digamma distribution and previously known upper bounds on $C(\lambda,\mu)$ and $C(\lambda,\mu,A)$ in Figures~\ref{fig:compboundbest} and~\ref{fig:compboundall}.
As can be observed, the upper bound~\eqref{eq:UBdarkcurrent} significantly improves on previous upper bounds whenever $\mu$ is not small compared to $\lambda$, and the elementary upper bound obtained by replacing $\delta_\lambda$ with the upper bound from~\eqref{eq:elemUBdelta} sharply approaches~\eqref{eq:UBdarkcurrent}.

\begin{figure}
	\centering
	\includegraphics[width=0.8\textwidth]{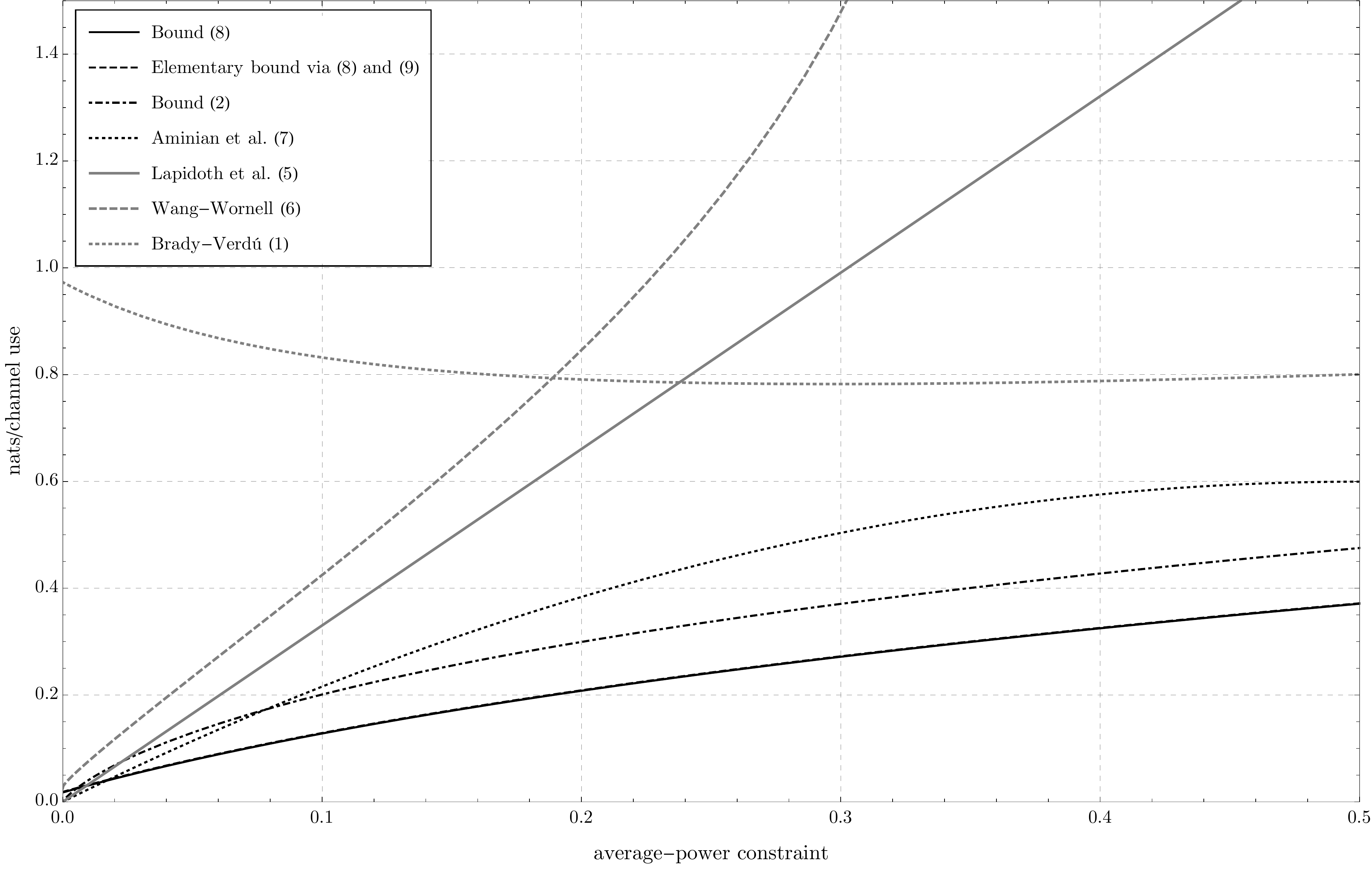}
	\caption{Comparison between the upper bound~\eqref{eq:UBdarkcurrent}, the elementary upper bound obtain by combining~\eqref{eq:UBdarkcurrent} and~\eqref{eq:elemUBdelta}, and previous upper bounds when $\lambda=1/10$. The upper bound~\eqref{eq:lapidothbound} is replaced by the underestimate~\eqref{eq:lapidothunderestimate}, the upper bound~\eqref{eq:cov} is computed assuming a peak-power constraint $A=1$, the upper bound~\eqref{eq:WWbound} is plotted without the additve $\mu e^{-\lambda}\sup_{x\geq 0}\phi_{\mu,\lambda}(x)$ term, which is always positive when it is well-defined, and the upper bound~\eqref{eq:BVbound} is plotted by ignoring the positive asymptotic term $\eps$.}
	\label{fig:compboundall}
\end{figure}

\begin{figure}
	\centering
	\includegraphics[width=0.8\textwidth]{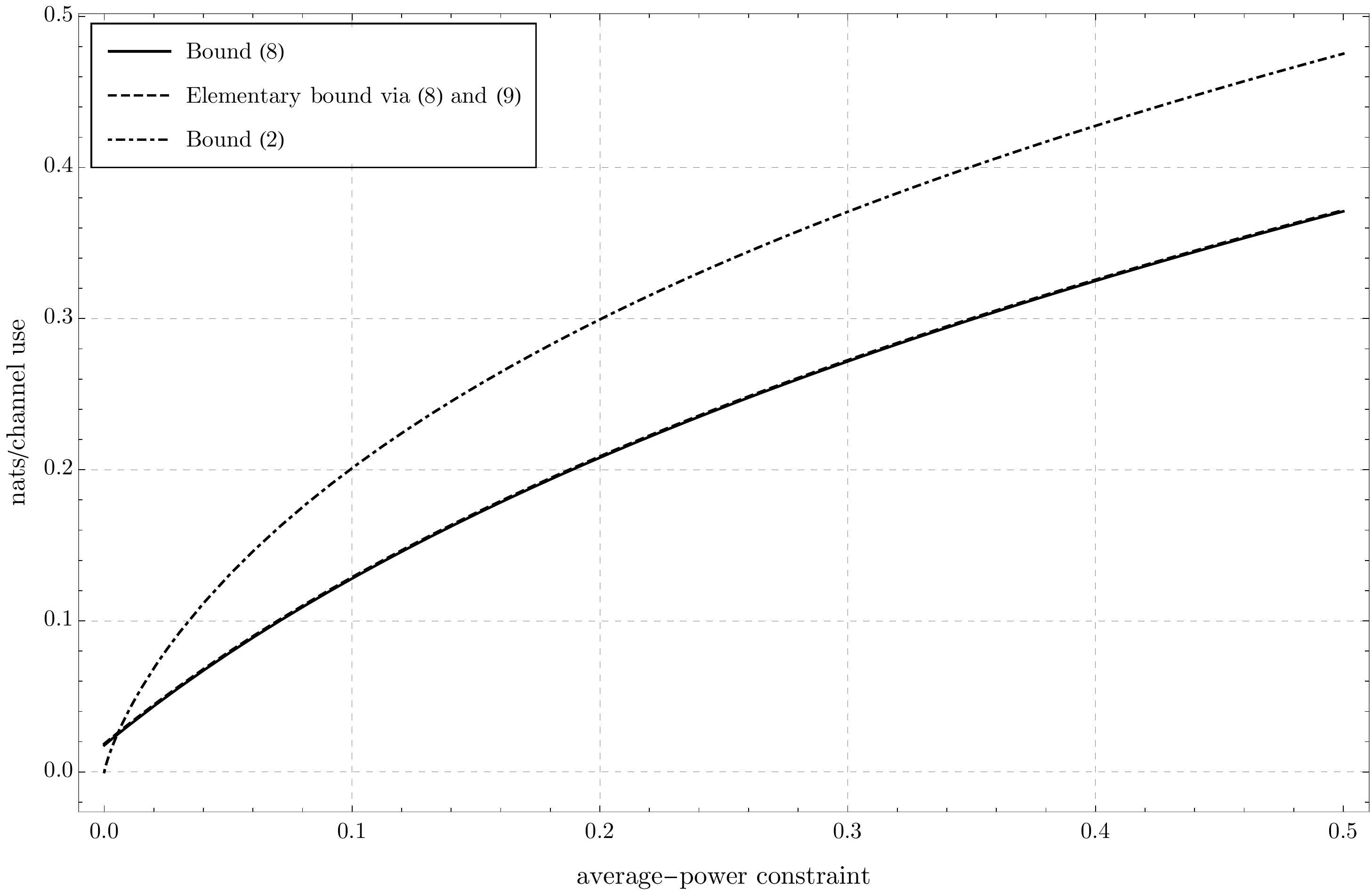}
	\caption{Comparison between the upper bound~\eqref{eq:UBdarkcurrent}, the elementary upper bound obtain by combining~\eqref{eq:UBdarkcurrent} and~\eqref{eq:elemUBdelta}, and the upper bound~\eqref{eq:best0dark} when $\lambda=1/10$.}
	\label{fig:compboundbest}
\end{figure}

\section*{Acknowledgments}

We thank Jun Chen for an insightful discussion regarding the results from~\cite{CHC14b}.

\bibliographystyle{IEEEtran}
\bibliography{refs}

\begin{thebibliography}{10}
\providecommand{\url}[1]{#1}
\csname url@samestyle\endcsname
\providecommand{\newblock}{\relax}
\providecommand{\bibinfo}[2]{#2}
\providecommand{\BIBentrySTDinterwordspacing}{\spaceskip=0pt\relax}
\providecommand{\BIBentryALTinterwordstretchfactor}{4}
\providecommand{\BIBentryALTinterwordspacing}{\spaceskip=\fontdimen2\font plus
\BIBentryALTinterwordstretchfactor\fontdimen3\font minus
  \fontdimen4\font\relax}
\providecommand{\BIBforeignlanguage}[2]{{%
\expandafter\ifx\csname l@#1\endcsname\relax
\typeout{** WARNING: IEEEtran.bst: No hyphenation pattern has been}%
\typeout{** loaded for the language `#1'. Using the pattern for}%
\typeout{** the default language instead.}%
\else
\language=\csname l@#1\endcsname
\fi
#2}}
\providecommand{\BIBdecl}{\relax}
\BIBdecl

\bibitem{Sha90}
S.~Shamai~(Shitz), ``Capacity of a pulse amplitude modulated direct detection
  photon channel,'' \emph{IEE Proceedings I (Communications, Speech and
  Vision)}, vol. 137, no.~6, pp. 424--430, 1990.

\bibitem{CHC13}
J.~{Cao}, S.~{Hranilovic}, and J.~{Chen}, ``Capacity and nonuniform signaling
  for discrete-time {P}oisson channels,'' \emph{IEEE/OSA Journal of Optical
  Communications and Networking}, vol.~5, no.~4, pp. 329--337, 2013.

\bibitem{SSEL15}
T.~Sutter, D.~Sutter, P.~M. Esfahani, and J.~Lygeros, ``Efficient approximation
  of channel capacities,'' \emph{IEEE Transactions on Information Theory},
  vol.~61, no.~4, pp. 1649--1666, 2015.

\bibitem{CR19a}
M.~{Cheraghchi} and J.~{Ribeiro}, ``Improved upper bounds and structural
  results on the capacity of the discrete-time {P}oisson channel,'' \emph{IEEE
  Transactions on Information Theory}, vol.~65, no.~7, pp. 4052--4068, July
  2019.

\bibitem{BV90}
D.~Brady and S.~Verd{\'u}, ``The asymptotic capacity of the direct detection
  photon channel with a bandwidth constraint,'' in \emph{28th Allerton
  Conference on Communication, Control and Computing}, 1990, pp. 691--700.

\bibitem{Bra90}
\BIBentryALTinterwordspacing
D.~Brady, ``The analysis of optical, direct detection communication systems
  with point process observations,'' Ph.D. dissertation, Princeton University,
  1990. [Online]. Available:
  \url{https://search.proquest.com/docview/303849140}
\BIBentrySTDinterwordspacing

\bibitem{Mar07}
A.~Martinez, ``Spectral efficiency of optical direct detection,'' \emph{JOSA
  B}, vol.~24, no.~4, pp. 739--749, 2007.

\bibitem{LM09}
A.~Lapidoth and S.~M. Moser, ``On the capacity of the discrete-time {P}oisson
  channel,'' \emph{IEEE Transactions on Information Theory}, vol.~55, no.~1,
  pp. 303--322, 2009.

\bibitem{LSVW11}
A.~Lapidoth, J.~H. Shapiro, V.~Venkatesan, and L.~Wang, ``The discrete-time
  {P}oisson channel at low input powers,'' \emph{IEEE Transactions on
  Information Theory}, vol.~57, no.~6, pp. 3260--3272, 2011.

\bibitem{WW14}
L.~Wang and G.~W. Wornell, ``A refined analysis of the {P}oisson channel in the
  high-photon-efficiency regime,'' \emph{IEEE Transactions on Information
  Theory}, vol.~60, no.~7, pp. 4299--4311, 2014.

\bibitem{AAGNKM15}
G.~Aminian, H.~Arjmandi, A.~Gohari, M.~Nasiri-Kenari, and U.~Mitra, ``Capacity
  of diffusion-based molecular communication networks over {LTI}-{P}oisson
  channels,'' \emph{IEEE Transactions on Molecular, Biological and Multi-Scale
  Communications}, vol.~1, no.~2, pp. 188--201, 2015.

\bibitem{Gor62}
J.~P. {Gordon}, ``Quantum effects in communications systems,''
  \emph{Proceedings of the IRE}, vol.~50, no.~9, pp. 1898--1908, 1962.

\bibitem{GB65}
B.~E. {Goodwin} and L.~P. {Bolgiano}, ``Information capacity of a photoelectric
  detector,'' \emph{Proceedings of the IEEE}, vol.~53, no.~11, pp. 1745--1746,
  1965.

\bibitem{His71}
E.~Hisdal, ``Information in a photon beam vs modulation-level spacing,''
  \emph{J. Opt. Soc. Am.}, vol.~61, no.~3, pp. 328--332, Mar 1971.

\bibitem{JM71}
\BIBentryALTinterwordspacing
R.~Jodoin and L.~Mandel, ``Information rate in an optical communication
  channel,'' \emph{J. Opt. Soc. Am.}, vol.~61, no.~2, pp. 191--198, Feb 1971.
  [Online]. Available:
  \url{http://www.osapublishing.org/abstract.cfm?URI=josa-61-2-191}
\BIBentrySTDinterwordspacing

\bibitem{CHC10}
J.~Cao, S.~Hranilovic, and J.~Chen, ``Lower bounds on the capacity of
  discrete-time {P}oisson channels with dark current,'' in \emph{25th Biennial
  Symposium on Communications (QBSC), 2010}.\hskip 1em plus 0.5em minus
  0.4em\relax IEEE, 2010, pp. 357--360.

\bibitem{YZWD14}
Y.~Yu, Z.~Zhang, L.~Wu, and J.~Dang, ``Lower bounds on the capacity for
  {P}oisson optical channel,'' in \emph{Sixth International Conference on
  Wireless Communications and Signal Processing (WCSP), 2014}.\hskip 1em plus
  0.5em minus 0.4em\relax IEEE, 2014, pp. 1--5.

\bibitem{CHC14a}
J.~Cao, S.~Hranilovic, and J.~Chen, ``Capacity-achieving distributions for the
  discrete-time {P}oisson channel - {P}art {I}: General properties and
  numerical techniques,'' \emph{IEEE Transactions on Communications}, vol.~62,
  no.~1, pp. 194--202, 2014.

\bibitem{CHC14b}
------, ``Capacity-achieving distributions for the discrete-time {P}oisson
  channel - {P}art {II}: Binary inputs,'' \emph{IEEE Transactions on
  Communications}, vol.~62, no.~1, pp. 203--213, 2014.

\bibitem{AS65}
M.~Abramowitz and I.~A. Stegun, \emph{Handbook of mathematical functions with
  formulas, graphs, and mathematical tables}.\hskip 1em plus 0.5em minus
  0.4em\relax Dover New York, 1965, vol. 2172.

\bibitem{Alz97}
H.~Alzer, ``On some inequalities for the incomplete gamma function,''
  \emph{Mathematics of Computation}, vol.~66, no. 218, pp. 771--778, 1997.

\bibitem{Che19}
M.~Cheraghchi, ``Capacity upper bounds for deletion-type channels,'' \emph{J.
  ACM}, vol.~66, no.~2, pp. 9:1--9:79, Mar. 2019.

\bibitem{LM03}
A.~Lapidoth and S.~M. Moser, ``Capacity bounds via duality with applications to
  multiple-antenna systems on flat-fading channels,'' \emph{IEEE Transactions
  on Information Theory}, vol.~49, no.~10, pp. 2426--2467, 2003.

\bibitem{CR19b}
M.~{Cheraghchi} and J.~{Ribeiro}, ``Sharp analytical capacity upper bounds for
  sticky and related channels,'' \emph{IEEE Transactions on Information
  Theory}, vol.~65, no.~11, pp. 6950--6974, Nov 2019.

\end{thebibliography}

\end{document}